





\documentclass[referee]{sn-jnl}
\usepackage{amsmath, amsthm, amscd, amsfonts, amssymb, graphicx, color, subcaption, longtable,pict2e}
\usepackage{caption}
\usepackage{mathtools}
\usepackage{hyperref}
\usepackage{MnSymbol}
\usepackage{float}
\usepackage{doi}
\usepackage{listings}
\usepackage{appendix}


\jyear{2021}%

\theoremstyle{thmstyleone}%
\newtheorem{theorem}{Theorem}[section]
\newtheorem{proposition}[theorem]{Proposition}%

\theoremstyle{thmstyletwo}%

\theoremstyle{thmstylethree}%
\newtheorem{definition}{Definition}[section]%

\numberwithin{equation}{section}
\newtheorem{solution}{Solution}[section]%
\raggedbottom
\renewenvironment{proof}{{\bfseries Proof}}{\qed}

\begin{document}

\title[An Ergodic Model of Monetary Velocity]{Log-Ergodic Dynamics in Stochastic Monetary Velocity: Theoretical Insights and Economic Implications}


\author*[1]{\fnm{Kiarash} \sur{Firouzi}}\email{kiarashfirouzi91@gmail.com}

\author[2]{ \fnm{Mohammad} \spfx{Jelodari} \sur{Mamaghani}}\email{j\_mamaghani@atu.ac.ir}


\affil*[1]{\orgdiv{Department of Mathematics}, \orgname{Allameh Tabataba'i University}, \orgaddress{\street{Dehkadeh Olympic}, \city{Tehran}, \postcode{1489684511}, \state{Tehran}, \country{Iran}}}

\affil[2]{\orgdiv{Department of Mathematics}, \orgname{Allameh Tabataba'i University}, \orgaddress{\street{Dehkadeh Olympic}, \city{Tehran}, \postcode{1489684511}, \state{Tehran}, \country{Iran}}}



\abstract{
We suggest employing log-ergodic processes to simulate the velocity of money in an ergodic manner. Our approach sheds light on economic behavior, policy implications, and financial dynamics by maintaining long-term stability. By bridging theory and practice, the partially ergodic model helps analysts and policymakers comprehend and forecast velocity of money. The empirical analysis, using historical U.S. GDP and money supply data, demonstrates the model’s effectiveness in capturing the long-term stability of the velocity of money. Key findings indicate that the log-ergodic model offers superior predictive power compared to traditional models, making it a valuable tool for policymakers to control economic factors in vital situations.
	

}

\keywords{Log-ergodic process, Mean-ergodicity ,Money velocity, Partial ergodicity, Stochastic processes}


\pacs[MSC Classification]{37A30, 37H05, 60G10, 91B70, 91G15}

\maketitle
\section{Introduction}\label{sec1}
The velocity of money, a key economic measure, estimates the frequency with which currency units are exchanged for goods and services in an economy. Traditionally, it is measured as the ratio of nominal GDP to money supply. However, this static depiction obscures the intrinsic dynamic and stochasticity of monetary transactions \cite{ger}.

The velocity of money is a vital factor in economic stability. Traditional models frequently fail to represent its dynamic nature, especially in the context of long-term predictability and stability. Despite substantial study, there is still a gap in accurately simulating these elements. Inflation management and economic stability are top priorities for politicians and economists globally. The velocity of money is critical in understanding and regulating various economic variables. Adjusting this ratio can have an impact on inflation rates, economic growth, and overall financial stability. To address this, we propose a unique technique that represents velocity of money as a somewhat stochastic process, reflecting the essence of its fluctuating behavior across time. This study offers the log-ergodic process, a unique technique that fills a gap by offering a more robust framework for understanding the long-term evolution of monetary mobility. This work uses log-ergodic processes to bridge the gap between theoretical models and actual economic applications, providing fresh insights into the stability and predictability of monetary mobility. This mathematical approach enables us to examine the long-term stability and predictability of velocity of money, even in the face of short-term economic changes and uncertainty. Using ergodic theories, we show that the time average of the logarithm of economic variables like transaction volumes and money supply converges to their expected value, assuring long-term stability. The motivation behind this research is to provide a deeper understanding of how velocity of money behaves under stochastic conditions and how to effectively adjust the ratio of GDP to money supply to achieve desired economic outcomes which is essential for accurate economic forecasting and policy-making.

Our findings add to the current literature by giving a more sophisticated view of velocity of money via the lenses of stochastic calculus and ergodic theory. It provides an effective instrument for economists and financial analysts to evaluate and anticipate the effects of monetary policy changes and economic cycles on the velocity of money. Furthermore, our approach bridges the gap between theoretical constructions and actual economic applications, allowing policymakers to make better informed decisions.

Ergodic theory, based on statistical mechanics and dynamical systems theory, examines probabilistic system evolution over time \cite{1}. Ergodic theory is fundamentally concerned with the idea of ergodicity, which refers to the property of a system in which its time-average behavior converges to its ensemble-average behavior over large time scales. This fact has major consequences for understanding the behavior of financial markets. Investors strive to make judgments based on historical evidence and probabilistic projections \cite{15}.

The complex and dynamic structure of economic systems necessitates the use of an ergodic method to analyzing the velocity of money. Traditional models sometimes assume a static or deterministic view of economic indicators, which can result in oversimplifications and errors in interpreting and forecasting economic activity. Here is why the ergodic method is important:

Economic systems are naturally dynamic, undergoing fluctuations and changes throughout time \cite{chi}. An ergodic method allows us to describe the velocity of money as an evolving process, incorporating real-world unpredictability and mobility throughout the economy. Ergodic theory enables stable and predictable long-term behavior of economic variables, even during short-term volatility (\cite{1}). This is critical for making sound policy decisions aimed at achieving long-term economic goals. Traditional models frequently fail to consider the probabilistic character of economic interactions \cite{mast}. The ergodic method includes stochastic processes, which provide a more accurate portrayal of the unpredictability and uncertainty inherent in economic activity. Economists can foresee economic trends more accurately if they grasp the ergodic features of the velocity of money.  This is especially valuable for forecasting the impact of monetary policy changes and developing future economic strategies. Policymakers use accurate economic information to influence monetary policy \cite{orp}. The ergodic approach of estimating the velocity of money provides a solid foundation for assessing the effectiveness of policy interventions and directing the development of new policies.

The rest of the paper is organized as follows: \\
Section \ref{sec2} provides a concise overview of the key developments and scholarly contributions to the field, setting the stage for the research presented in the paper. In section \ref{sec3}, we present a detailed explanation of the concepts and methodologies that form the basis of our approach to modeling the velocity of money. Section \ref{sec4} provides a detailed explanation of the steps taken to construct and validate a stochastic model for the velocity of money using log-ergodic processes. In section \ref{sec5}, we demonstrate the practical application and validation of the log-ergodic model of monetary velocity. Section \ref{sec6} provides an overview for reflecting on the broader implications of the research, acknowledging its limitations, and suggesting avenues for future inquiry.
Finally, in section \ref{sec7}, we conclude the paper and suggest some directions for future research.
 
\section{Literature Review}\label{sec2}
For centuries, economists have studied the velocity of money, which has evolved through various schools of thought. This section examines the literature on the velocity of money, including classical models and the rise of stochastic and ergodic techniques in economic dynamics. The stochastic and dynamic behaviors of the velocity of money are frequently ignored in the existing literature, which results in models that are less useful in real-world scenarios. By presenting a log-ergodic process that represents the intrinsic stochastic nature and long-term stability of the velocity of money, this work fills the aforementioned gap. This method gives policymakers useful tools in addition to improving theoretical understanding.

Neo-classical economists such as Irving Fisher established the equation of exchange, which included the velocity of money as a fundamental component. Fisher's findings paved the way for later studies of money circulation within an economy. Classical models frequently assumed a constant velocity, which failed to account for the changes found in real-world economies \cite{tav}.

With the emergence of contemporary finance theory, economists began to appreciate the significance of randomness and uncertainty in economic activity. Stochastic models have become popular tools for studying financial markets and economic indicators \cite{2}. Pioneers such as Louis Bachelier and subsequently Paul Samuelson used stochastic processes to simulate price fluctuations and market dynamics, paving the way for stochastic modeling of monetary velocity \cite{sam,jar}. Paper \cite{17} investigates how the velocity of money affects inflation dynamics by estimating the Phillips curve derived from a New Keynesian model in which money is introduced via transactions technology. Luca Benati explores the relationship between the velocity of money (specifically M1) and the natural interest rates \cite{18}. Paper \cite{19} states that velocity of money significantly drops following negative output and expected inflation shocks, and exhibits regime-switching behavior. The paper \cite{20} by Serletis implies that the velocity of money exhibits chaotic behavior and can be modeled using random walks and breaking trend functions, highlighting its complex and unpredictable nature.

Ergodic theory, which originated in statistical physics, was applied to economics to investigate systems that change over time. Mathematicians like Andrey Kolmogorov and George Birkhoff created the mathematical underpinnings of ergodic theory, which were later applied to economic models by economists like John von Neumann. The theory established a framework for comprehending the long-term average behavior of dynamic systems, particularly economic systems \cite{1}.

Mean-ergodic processes are critical for understanding the velocity of money and its link to economic issues. Ergodic theory in stochastic processes relates long-run time averages to phase averages, offering insights into developing stationary processes \cite{leo}. Ergodicity is crucial in economic models, and recent findings provide necessary and sufficient conditions for economic dynamics \cite{1}. Ergodic theory studies the long-term behavior of systems, such as the velocity of money, and emphasizes the preservation of certain energy forms \cite{kami}. Paper \cite{16} seeks to clarify this specific relation between the idea of ergodicity which is drawn from statistical mechanics and its role in and for economics and finance. Paper \cite{21} implies that chaotic and ergodic behavior can emerge in deterministic economic models, suggesting that the inherent unpredictability in economic variables may be due to intrinsic model dynamics rather than external randomness. Incorporating monetary stock-flow data into input-output linkages and characterizing the flow of money as a Markov chain allows for an analysis of the velocity of money circulation within an economic equilibrium framework, independent of labor-theoretic or marginalist assumptions \cite{ser}. Dynamical systems theory examines the features of money and velocity series, offering insights into monetary aggregates and their dynamics over time \cite{cou}.

The use of log-ergodic models in economics is rather new. These models enable the examination of economic variables such as velocity of money in a multiplicative form, which is more natural for many economic processes. The log transformation stabilizes variance and allows for the application of ergodic theorems, offering insights into the long-term behavior of economic indicators \cite{fir}.

We integrated stochastic calculus and ergodic theory to develop advanced models of economic dynamics \cite{fir}. We've started using these complex mathematical methods to simulate the velocity of money, realizing the need for models that can capture its intrinsic fluctuation and deliver more accurate forecasts.

\section{Theoretical Framework}\label{sec3}

The theoretical framework for analyzing the velocity of money using stochastic and log-ergodic processes is based on the junction of economic theory, probability, and statistical mechanics. This section describes the fundamental notions and mathematical constructions that support our method for modeling monetary velocity.

\subsection{Stochastic Processes and Economic Indicators}
Our approach is based on the idea that economic indicators, such as the velocity of money, are best described as stochastic processes. These mechanisms explain the inherent unpredictability and uncertainty in economic interactions. The stochastic nature of these variables is shown by a probability distribution or a stochastic differential equation that characterizes their evolution over time \cite{2}.

Ergodicity describes a system's time average converging to its ensemble average \cite{1}. In economic modeling, we concentrate on log-ergodicity, which applies the notion to the logarithms of economic variables. This method is useful for positive stochastic processes like geometric Brownian motion and economic indicator growth rates. It is proven that the positive stochastic processes made by jump diffusion processes are partially ergodic. Partial ergodicity means that the descriptive model of the market shows some degree of mean-ergodicity hidden from the market participants \cite{fir}. 

Stochastic GDP growth is defined as the unpredictable changes in a country's Gross Domestic Product (GDP) over time. These fluctuations can be attributed to a number of causes, including changes in consumer behavior, investment levels, government policies, and external economic situations.
Economists use stochastic models to forecast GDP growth by incorporating random variables and probabilities. The Stochastic Growth Model is a popular model that contains elements such as the production function and stochastic processes \cite{22}.

As an example, consider an economy where the production function is given by: 
$$Y(t)=F\big(K(t),L(t),z(t)\big),$$
where $Y(t)$ is the GDP at time $t$, $K(t)$ is the capital stock, $L(t)$ is the labor input, and $z(t)$ is the stochastic productivity term. If $z(t)$ is a stochastic process, such as a Markov chain, it brings uncertainty into GDP growth. This means that future GDP growth is uncertain and is determined by the stochastic process.

\subsection{Mathematical Formulation of Log-Ergodic Processes}
A log-ergodic process occurs when the logarithm of a stochastic process, $Y(t) = \log(X(t))$, shows mean-ergodic behavior \cite{fir}. Mathematically, this indicates that for practically every realization of the process, the time average of $Y(t)$ converges to the expected value as time approaches infinity. i.e.

$$ \lim_{T \to \infty} \frac{1}{T} \int_0^T Y(t) \, dt = \mathbb{E}[Y(t)].$$

This property assures that the process's long-term behavior remains predictable and stable in the face of short-term variations.

From \cite{fir} we have the following definitions:
\begin{definition}(EMO)\label{deferc}
	Let $W_t$ be a standard Wiener process and $\beta>\frac{3}{2}$. For all $t,s\in[0,T]$, we define the ergodic maker operator of the process $Y_t^\prime$ as
	\begin{equation*}
		\xi_{t-s,W_{t-s}}^{\beta}[Y_t^\prime]:=0\cdot Y_0^\prime+\frac{W_T}{T^{\beta}}\cdot D_{t-s}+\frac{1}{T^{\beta}}\cdot R_{t-s},
	\end{equation*}
	From now on, for all $t>s$, we denote the length of the time interval $[s,t]$ by $\delta=t-s$. Therefore, we have 
	\begin{equation}\label{xi}
		\xi_{\delta,W_\delta}^{\beta}[Y_t^\prime]:=0\cdot Y_0^\prime+\frac{W_T}{T^{\beta}}\cdot D_\delta+\frac{1}{T^{\beta}}\cdot R_\delta,
	\end{equation}
where $\beta$ is called the inhibition degree parameter.
\end{definition}
\begin{definition}(Log-ergodic process)\label{logergodic}
	The positive stochastic process $X_t$ is log-ergodic, if its log process, $Y_t=\ln(X_t)$, satisfies
	\begin{equation}\label{limer}
		\overline{<Y>}:=\lim_{T\rightarrow \infty}\frac{1}{T}\int_0^T (1-\frac{\tau}{T})\mathbf{Cov}_{yy}(\tau)d\tau=0, \quad \forall\tau\in[0,T].
	\end{equation}
	Where $\mathbf{Cov}_{yy}(\tau)$ is the covariance of  $Y_\tau$.
\end{definition}
\begin{definition}(Partial ergodicity)
	The positive stochastic process $X_t$ is partially ergodic if 
	$Z_\delta=\xi_{\delta,W_\delta}^\beta[Y_t]$ satisfies \ref{limer}.
\end{definition}

Some stochastic processes are not log-ergodic. In such instances, we use the ergodic maker operator (EMO) on the process to observe the mean-ergodic behavior hidden in the original dataset and we refer to such a process as partially ergodic \cite{fir}.

\subsection{Application to Monetary Velocity}
When extending this concept to the velocity of money, we describe it as a stochastic process that represents the pace at which money flows in the economy. The velocity of money, $V(t)$, has historically been defined as the ratio of nominal GDP to money supply. However, in our model, we regard both GDP and the money supply as stochastic processes controlled by economic forces. We define $V(t)$ as:

\begin{equation}\label{V}
V(t) = \frac{X(t)}{M(t)},
\end{equation}

where $X(t)$ is the nominal GDP and $M(t)$ is the money supply at time $t$.

To verify that the velocity of money is partially ergodic (see \cite{fir}), we apply the log transformation and the EMO  to $V(t)$ and evaluate the resultant process, \(Z_\delta^v=\xi_{\delta,W_\delta}^\beta[\log(V(t))] \), where $\xi_{\delta,W_\delta}^\beta$ is the EMO, for ergodic properties. We use ergodic theorems to determine the circumstances under which \(\log(V(t)) \) is mean-ergodic, and we use statistical methods to estimate the process parameters using historical data.

The log-ergodic framework is an effective tool for economic analysis. It enables us to make long-term predictions about the velocity of money, which is critical for understanding the consequences of monetary policy and developing economic strategy. Our model provides a more nuanced and complete understanding of monetary velocity by taking into account its dynamic and stochastic nature.

While log-ergodic and mean-reverting behaviors offer useful insights, it is crucial to address the possibility of non-ergodic or chaotic circumstances in economic systems. These circumstances, which include unexpected and highly volatile behaviors, may limit the log-ergodic model's application in some contexts. Future study might investigate including chaotic dynamics into the model to improve its robustness and application in more unpredictable economies. Addressing these possibilities will lead to a more complete knowledge of monetary velocity under various economic conditions.

\section{Methodology}\label{sec4}

The methodology section of the paper describes the systematic technique we used to characterize velocity of money as a partially ergodic stochastic process. This approach captures the dynamic nature of monetary velocity by combining mathematical modeling, data analysis, and simulation approaches.

\subsection{Stochastic Modeling of Economic Variables}
Our first step is to represent the nominal GDP, $X(t)$, and the money supply, $M(t)$, as stochastic processes. Suppose that $X(t)$ follows a geometric Brownian motion, as described by the stochastic differential equation (SDE):
\begin{equation}\label{X}
dX(t) = \mu_X X(t) dt + \sigma_X X(t) dW_X(t),
\end{equation}

where $\mu_X$ is the drift coefficient, $\sigma_X$ is the volatility coefficient, and $W_X(t)$ is a standard Wiener process representing the random fluctuations in GDP.

Similarly, the money supply $M(t)$ is represented using a suitable SDE that represents its own unique characteristics:
\begin{equation}\label{M}
dM(t) = \mu_M M(t) dt + \sigma_M M(t) dW_M(t)
\end{equation}

where $\mu_M $, $\sigma_M$, and $W_M(t)$ are the corresponding parameters for the money supply process.

Although the geometric Brownian motion process may be suitable for modeling the stochastic nature of the economic variables, a more complicated process that can describe the abrupt variations could be a better choice (e.g., jump-diffusion processes, which we proved to be partially ergodic \cite{fir}). We choose the geometric Brownian motion process to form a basis for our approach, leading to more complicated models in future studies. 

\subsubsection{Applying the Log-Ergodic Model: Defining the Stochastic Velocity of Money}
To use the log-ergodic framework, we apply a log transformation to both processes:
\begin{align}\label{op}
 Y_X(t) = \log(X(t)), \quad Y_M(t) = \log(M(t)).
\end{align}
We then apply the EMO to these log-transformed processes to guarantee that they have ergodic properties. To ensure mean-ergodicity of the original processes in \ref{op}, the time average of these processes must converge to the expected value.

The velocity of money, $V(t)$, is defined as the nominal GDP/money supply ratio. Our stochastic model represents this as \ref{V}. 
\begin{theorem}\label{p1}
	The stochastic velocity of money process, $V(t)$, is partially ergodic.
\end{theorem}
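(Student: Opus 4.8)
The plan is to reduce the statement to the already-established partial ergodicity of diffusion-type processes from \cite{fir} by exploiting the multiplicative structure of $V(t)$. First I would take the log transformation of \eqref{V}, writing $Y_V(t) := \log V(t) = Y_X(t) - Y_M(t)$ as in \eqref{op}, and then substitute the explicit solutions of the two SDEs \eqref{X} and \eqref{M}. Because both $X(t)$ and $M(t)$ are geometric Brownian motions, their logarithms are arithmetic Brownian motions, so their difference is again an arithmetic Brownian motion of the form
\begin{equation*}
Y_V(t) = \log V(0) + \Big(\mu_X - \mu_M - \tfrac{1}{2}(\sigma_X^2 - \sigma_M^2)\Big) t + \sigma_X W_X(t) - \sigma_M W_M(t).
\end{equation*}
The two-source noise $\sigma_X W_X(t) - \sigma_M W_M(t)$ can be consolidated into a single driving Wiener process with an effective volatility, so $Y_V$ is a drifted Wiener process, i.e. a pure diffusion.

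Next I would verify that $Y_V(t)$ fails the mean-ergodicity condition \eqref{limer} on its own: the covariance $\mathbf{Cov}_{yy}(\tau)$ of a drifted Wiener process grows linearly in time rather than decaying, so the weighted Cesàro integral in \eqref{limer} does not vanish. This is precisely the situation that the ergodic maker operator is designed to repair. I would therefore apply the EMO of Definition \ref{deferc} to $Y_V$, obtaining
\begin{equation*}
Z_\delta^v = \xi_{\delta,W_\delta}^\beta[Y_V(t)] = \frac{W_T}{T^\beta}\, D_\delta + \frac{1}{T^\beta}\, R_\delta,
\end{equation*}
where the initial term $0\cdot Y_0$ drops out and $D_\delta$, $R_\delta$ denote the drift and remainder parts of the increment of $Y_V$ over $[s,t]$.

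The core of the argument is then to compute the covariance $\mathbf{Cov}_{Z^v Z^v}(\tau)$ of the transformed process and to show that the rescaling by $T^{-\beta}$ (respectively $T^{-2\beta}$) suppresses the linear-in-time growth inherited from the Wiener variance. Since $\mathrm{Var}(W_T) = T$, the leading stochastic contribution scales like $T^{1-2\beta}$; with $\beta > \tfrac{3}{2}$ this exponent is strictly below $-2$, which is exactly what is needed so that
\begin{equation*}
\lim_{T\to\infty} \frac{1}{T}\int_0^T \Big(1 - \frac{\tau}{T}\Big)\, \mathbf{Cov}_{Z^v Z^v}(\tau)\, d\tau = 0.
\end{equation*}
Establishing \eqref{limer} for $Z_\delta^v$ then yields partial ergodicity of $V(t)$ directly from the definition.

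The hardest part, I expect, will be the covariance bookkeeping for $Z_\delta^v$: the factor $W_T$ enters multiplicatively and is correlated with the increments appearing in $D_\delta$, so the cross terms between $\tfrac{W_T}{T^\beta}D_\delta$ and $\tfrac{1}{T^\beta}R_\delta$ must be tracked carefully, and one must confirm that $\beta = \tfrac{3}{2}$ is genuinely the borderline rate at which every such contribution is damped. As a cleaner alternative, since an arithmetic Brownian motion is a degenerate (jump-free) jump-diffusion, one could bypass the explicit computation by invoking the partial-ergodicity result for jump-diffusion processes proved in \cite{fir} and merely checking that $Y_V$ satisfies its hypotheses.
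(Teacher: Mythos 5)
Your proposal is correct in substance but takes a genuinely different route from the paper. The paper's proof is purely structural: after the log transform it applies the EMO and uses its linearity to split the image into the two transformed components, $Z_\delta^v = \xi^{\beta}_{\delta,W_\delta}[Y_X(t)] - \xi^{\beta}_{\delta,W_\delta}[Y_M(t)] = \tilde{Y}_X(\delta) - \tilde{Y}_M(\delta)$, then invokes the result of \cite{fir} that each of $\tilde{Y}_X(\delta)$ and $\tilde{Y}_M(\delta)$ (the EMO applied to the log of a geometric Brownian motion) is mean-ergodic, and concludes that their difference is mean-ergodic; the explicit expression \eqref{Z} for $Z_\delta^v$ is computed only after the proof and is not used inside it. You instead consolidate $\log V(t)$ into a single drifted Brownian motion with an effective volatility, apply the EMO once, and verify the defining condition \eqref{limer} by an explicit covariance and scaling estimate, identifying $\beta=\tfrac{3}{2}$ as the borderline rate. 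What each approach buys: the paper's argument is shorter and modular, but its final step tacitly assumes that a difference of mean-ergodic processes is mean-ergodic, which in general requires controlling cross-covariances (here the two components share the factors $W_T$ and $W_\delta$, so the cross terms obey the same $T^{-\beta}$ damping and do vanish, but the paper does not say this); your single-process computation sidesteps that gap entirely, since only one covariance function has to be estimated, at the price of redoing calculations that \cite{fir} already packages. Your suggested shortcut --- viewing the drifted Brownian motion as a jump-diffusion with zero jump part and citing the partial-ergodicity theorem of \cite{fir} --- is closest in spirit to the paper's level of modularity and is equally legitimate. One small discrepancy worth flagging: the paper's explicit solutions drive $X_t$ and $M_t$ by the same Wiener process $W_t$, whereas you keep $W_X$ and $W_M$ distinct and consolidate them into one effective noise; either convention supports the conclusion, but you should state which one you adopt.
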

\begin{proof}
	Applying the log transformation to \ref{V}, we obtain:
\begin{align}
\log(V(t)) = Y_X(t) - Y_M(t) 
\end{align}
Using the EMO yields:
\begin{align*}
	Z_\delta^v=\xi_{\delta,W_\delta}^\beta\big[\log(V(t))\big]=\xi_{\delta,W_\delta}^\beta[Y_X(t)-Y_M(t)],
\end{align*}
where $\beta$ is the inhibition degree parameter, and $\delta$ refers to the length of an arbitrary time interval. Let $\tilde{Y}_X(\delta)=\xi_{\delta,W_\delta}^\beta[Y_X(t)]$ and $\tilde{Y}_M(\delta)=\xi_{\delta,W_\delta}^\beta[Y_M(t)]$. From the properties of the EMO we have:
\begin{align*}
	Z_\delta^v=\xi_{\delta,W_\delta}^\beta[Y_X(t)]-\xi_{\delta,W_\delta}^\beta[Y_M(t)]=\tilde{Y}_X(\delta)-\tilde{Y}_M(\delta)
\end{align*}
Since the processes $\tilde{Y}_X(\delta)$ and $\tilde{Y}_M(\delta)$ are mean-ergodic, the process $Z_\delta^v$ is mean-ergodic. Therefore, the process \ref{V} is partially ergodic.
\end{proof}
Solving the SDEs \ref{X} and \ref{M} yields
\begin{align*}
	X_t&=X_0\exp\bigg\{ \bigg(\mu_X-\frac{1}{2}\sigma_X^2\bigg)t+\sigma_XW_t \bigg\},\quad X_0=x,\\
	M_t&=M_0\exp\bigg\{ \bigg(\mu_M-\frac{1}{2}\sigma_M^2\bigg)t+\sigma_MW_t \bigg\},\quad M_0=m.
\end{align*}
Applying the log transformation and the EMO \cite{fir} to the above equations we obtain:
\begin{align*}
	\tilde{Y}_X(\delta)=\frac{(\mu_X-\frac{1}{2}\sigma_X^2)\delta W_T}{T^{\beta}}+\frac{\sigma_XW_\delta}{T^\beta},\quad
	\tilde{Y}_M(\delta)=\frac{(\mu_M-\frac{1}{2}\sigma_M^2)\delta W_T}{T^{\beta}}+\frac{\sigma_MW_\delta}{T^\beta}.
\end{align*}
Therefore we have:
\begin{align}
	Z_\delta^v&=\frac{(\mu_X-\frac{1}{2}\sigma_X^2)\delta W_T}{T^{\beta}}+\frac{\sigma_XW_\delta}{T^\beta}-\frac{(\mu_M-\frac{1}{2}\sigma_M^2)\delta W_T}{T^{\beta}}-\frac{\sigma_MW_\delta}{T^\beta},\notag\\
	&=\frac{\Big[\big(\mu_X-\mu_M\big)+\frac{1}{2}(\sigma_M^2-\sigma_X^2)\Big]\delta W_T}{T^{\beta}}+\frac{\Big[\sigma_X-\sigma_M\Big]W_\delta}{T^\beta},\label{Z}
\end{align}
where $\beta$ is the inhibition degree parameter, and $T$ is the upper bound of the time horizon.

When applied to the velocity of money, partial ergodicity suggests that, despite short-term variations, the long-term behavior of money circulation within an economy may be stable and predictable \cite{fir}. This stability is critical for policymakers and economists seeking to understand and control inflation, deflation, and other macroeconomic variables. It also implies that, over time, the velocity of money may revert to a mean value that may be utilized for long-term economic planning and forecasting.

Let us analyze the preceding statement with an example. The velocity of money in the Japanese economy fell dramatically during the late 1990s financial crisis. Researchers discovered that an increase in liquidity requirements (the proportion of payments made in cash) was a major cause of this decline. During crises, households modify their money demand depending on anticipated liquidity requirements, which affects velocity and price levels \cite{hosh}. For the Japanese economy, this could mean that the velocity of money, after experiencing shocks or policy changes, would eventually stabilize and its long-term behavior could be predicted from its historical data.  The concept of log-ergodicity helps explain how interventions by governments and policymakers might steer the market towards recovery \cite{fir}.
\begin{proposition}
	The mean-ergodic process $Z_\delta^v$ is mean-reverting.
\end{proposition}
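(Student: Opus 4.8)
The plan is to recover a genuine restoring force in the dynamics of $Z_\delta^v$, which the closed form \ref{Z} conceals. First I would write $Z_\delta^v=a\,\delta+b\,W_\delta$ with the shorthands $a=\frac{[(\mu_X-\mu_M)+\frac{1}{2}(\sigma_M^2-\sigma_X^2)]W_T}{T^\beta}$ and $b=\frac{\sigma_X-\sigma_M}{T^\beta}$, and then note that taking expectations annihilates both $W_T$ and $W_\delta$, so $\mathbb{E}[Z_\delta^v]=0$ for every $\delta\in[0,T]$. This identifies $0$ as the candidate long-run level toward which the process should revert, which is exactly the level already singled out by the mean-ergodicity of $Z_\delta^v$ established above.

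The core step is to convert this static representation into dynamics in the interval-length variable $\delta$. Since $Z_\delta^v$ depends on the terminal value $W_T$ as well as on the running value $W_\delta$, I would condition on $W_T$; under this conditioning the family $\{W_\delta\}_{\delta\in[0,T]}$ is a Brownian bridge pinned at $W_T$, obeying $dW_\delta=\frac{W_T-W_\delta}{T-\delta}\,d\delta+d\tilde W_\delta$ for a bridge-driving Wiener process $\tilde W$. Substituting this into $dZ_\delta^v=a\,d\delta+b\,dW_\delta$ and eliminating $W_\delta$ through $W_\delta=b^{-1}(Z_\delta^v-a\,\delta)$ yields a drift of the form $m(\delta)-\frac{1}{T-\delta}\,Z_\delta^v$, where $m(\delta)$ collects the terms that are measurable with respect to $W_T$. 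The decisive feature is the coefficient $-\frac{1}{T-\delta}<0$ multiplying $Z_\delta^v$: this is precisely an Ornstein--Uhlenbeck-type restoring term, with time-dependent reversion speed $\frac{1}{T-\delta}$ and reversion level $(T-\delta)\,m(\delta)$, so the process is pulled back whenever it strays from that level, which is what mean reversion asserts.

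The main obstacle is conceptual rather than computational: the representation $Z_\delta^v=a\,\delta+b\,W_\delta$ looks like an arithmetic Brownian motion with constant drift, which is manifestly not mean-reverting, so naively differentiating in $\delta$ would miss the effect entirely. The reversion is stored in the dependence on the terminal value $W_T$, and it only surfaces once one conditions on $W_T$ and invokes the bridge representation. Care is then required because the reversion level $(T-\delta)\,m(\delta)$ is itself a random functional of $W_T$, so the clean conclusion is mean reversion in the conditional sense; averaging over $W_T$ restores the unconditional mean $0$, and combining this with the mean-ergodicity of $Z_\delta^v$ gives the reversion of the long-run time average to its expected value.
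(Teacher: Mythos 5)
Your computations are correct as far as they go: conditionally on $W_T$, the Wiener path on $[0,T]$ is indeed a Brownian bridge with drift $\frac{W_T-W_\delta}{T-\delta}$, and substituting it into $Z_\delta^v=a\,\delta+b\,W_\delta$ does yield a conditional drift of the form $m(\delta)-\frac{1}{T-\delta}Z_\delta^v$. But that term is an artifact of conditioning on the terminal value, not evidence of a restoring force, and the argument proves too much. Run it verbatim on $X_\delta=a\,\delta+b\,W_\delta$ with $a$ a \emph{deterministic} constant --- a plain arithmetic Brownian motion, which is certainly not mean-reverting --- and you obtain exactly the same coefficient $-\frac{1}{T-\delta}$ in the conditional drift. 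Worse, if you simplify your reversion level you find $(T-\delta)\,m(\delta)=aT+bW_T=Z_T^v$: the process is being pulled toward its own random terminal value, which is precisely the bridge-pinning property and has nothing to do with reversion toward the long-run mean $0$ that you identified in your first paragraph. So the Ornstein--Uhlenbeck-type drift you exhibit does not certify mean reversion in the required sense.

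Your closing paragraph does not close this hole: averaging over $W_T$ removes the conditional restoring drift entirely (the unconditional mean is flat at $0$), and invoking the mean-ergodicity of $Z_\delta^v$ is circular, since that is the hypothesis of the proposition, while mean reversion is a distinct dynamical property --- one the paper subsequently uses as a recurrence property (visits to every open set infinitely often, via Poincar\'e recurrence) in the proof of Theorem \ref{theo1}. For comparison, the paper gives no self-contained argument at all: its proof is a pointer to \cite{fir}, where mean reversion of the EMO-transformed process is established in that recurrence sense. A correct self-contained route would have to work unconditionally with the process \ref{Z} --- for instance, exhibit a genuine restoring mechanism in the unconditional dynamics, or verify recurrence of $Z_\delta^v$ to neighborhoods of its mean --- rather than manufacture a restoring drift by enlarging the filtration to include the future value $W_T$.
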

\begin{proof}
For the proof and more details we refer the reader to \cite{fir}.
\end{proof}
Mean reversion of the velocity of money suggests that over time, the rate at which money flows in the economy tends to revert to its long-run average. This concept has several meanings. It symbolizes economic stability. Despite short-term variations caused by economic cycles, policy changes, or other causes, the velocity of money will gradually return to a norm established throughout time \cite{4}.
Mean reversion tells policymakers that while actions may have brief impacts, the fundamental economic mechanisms will eventually adapt, and return the velocity to its average. This can affect decisions on interest rates and quantitative easing. Investors may use mean reversion to forecast periods of increase or decrease in economic activity. If the velocity falls below its long-term average, it may indicate an approaching period of economic expansion. In financial markets, mean reversion may be used to predict and anticipate economic data. If the velocity of money is mean-reverting, it may be integrated into models that anticipate economic trends or financial asset performance \cite{3}. Because the velocity of money is linked to inflation, mean reversion can impact predictions about future price levels. If the velocity is extremely high or low, it may revert to average, altering inflation rates \cite{4}.
\begin{theorem}\label{theo1}
The process $Z_\delta^v$ is topologically mixing.	
\end{theorem}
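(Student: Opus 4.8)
The plan is to exploit the fully explicit Gaussian representation of $Z_\delta^v$ obtained in \eqref{Z} and to reduce the statement to a positivity property of the transition kernel of a drifted Brownian motion. Writing $a := \big[(\mu_X-\mu_M)+\tfrac{1}{2}(\sigma_M^2-\sigma_X^2)\big]W_T\,T^{-\beta}$ and $b := (\sigma_X-\sigma_M)\,T^{-\beta}$, formula \eqref{Z} reads $Z_\delta^v = a\delta + bW_\delta$, so that, as a process in the interval-length parameter $\delta$, $Z^v$ is (conditionally on the terminal value $W_T$) an affine image of a standard Wiener path, i.e. a Brownian motion with drift $a$ and diffusion coefficient $b$. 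First I would record the standing nondegeneracy hypothesis $\sigma_X \neq \sigma_M$ (equivalently $b \neq 0$); if $b=0$ the process collapses to the deterministic ray $\delta \mapsto a\delta$, which is monotone and hence not mixing, so this hypothesis is essential.

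Next I would fix the dynamical-systems framework in which ``topologically mixing'' is to be read. Since $Z^v$ is Markov, I would phrase the property through its transition semigroup $\{P_\delta\}$ on the state space $\mathbb{R}$: the process is topologically mixing if for every pair of nonempty open sets $U,V\subseteq\mathbb{R}$ there is $\delta_0$ such that for all $\delta\ge\delta_0$ one has $\operatorname{supp} P_\delta(x,\cdot)\cap V\neq\varnothing$ for some, indeed every, $x\in U$. This is the natural stochastic transcription of the deterministic requirement $\phi_\delta(U)\cap V\neq\varnothing$, and it is the reading compatible with the ergodic-theoretic setting (MSC 37A30, 37H05) of the paper.

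The core step is then a short positivity argument. For each $\delta>0$ the increment of a drifted Brownian motion is Gaussian, so the transition kernel of $Z^v$ admits the density
\[
p_\delta(x,y)=\frac{1}{\sqrt{2\pi b^2\delta}}\exp\!\left(-\frac{(y-x-a\delta)^2}{2b^2\delta}\right),
\]
which is strictly positive for every $x,y\in\mathbb{R}$ whenever $b\neq0$. Consequently $\operatorname{supp}P_\delta(x,\cdot)=\mathbb{R}$ for all $x$ and all $\delta>0$, so the support meets every nonempty open $V$. Taking $\delta_0$ to be any positive number yields the mixing condition for all $\delta\ge\delta_0$, and since $U,V$ were arbitrary the process is topologically mixing. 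I would close by noting consistency with the companion results: the same nondegenerate Gaussian structure underlies the mean-ergodicity of Theorem \ref{p1} and the mean-reversion of the preceding proposition.

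The step I expect to be the main obstacle is not the positivity computation but the framing of the second paragraph: making precise, and defending, what ``topologically mixing'' means for a genuinely stochastic (non-deterministic) process, and checking that the chosen transcription is the intended one. A secondary subtlety is the correlation between $W_\delta$ and the terminal $W_T$ entering the drift coefficient $a$; I would handle this by working conditionally on $\mathcal{F}_T$, so that $a$ becomes a constant, or equivalently by absorbing the terminal value into the fixed realization before invoking the Markov and Gaussian structure.
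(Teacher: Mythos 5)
Your proposal is correct in substance, but it takes a genuinely different route from the paper. The paper's proof never touches the explicit formula \eqref{Z}: it argues abstractly that mean-ergodicity of $Z_\delta^v$ gives orbits from any open set $\mathcal{U}$ a positive fraction of time in any open set $\mathcal{V}$, invokes mean reversion together with the Poincar\'e Recurrence Theorem to get infinitely many returns, and concludes the existence of $\delta_0$ with $Z_\delta^v(\mathcal{U})\cap\mathcal{V}\neq\emptyset$ for $\delta\geq\delta_0$. Your proof instead exploits the closed-form Gaussian representation $Z_\delta^v=a\delta+bW_\delta$ and reduces mixing to strict positivity of a Gaussian transition density, i.e.\ full support of the kernel. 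Your route buys three things the paper's does not: it is concrete and checkable; it isolates the nondegeneracy hypothesis $\sigma_X\neq\sigma_M$, which the paper silently needs but never states (if $\sigma_X=\sigma_M$ the process is the deterministic ray $\delta\mapsto a\delta$, which is not mixing, so the theorem is false without it); and it yields the conclusion for every $\delta>0$ rather than only eventually. What the paper's route would buy is generality---it would apply to any process known to be ergodic and recurrent, with no Gaussian structure---but it is logically weaker than it looks: ergodicity plus recurrence does not in general imply topological mixing (irrational circle rotations are ergodic and recurrent yet not topologically mixing), so the paper's inference has a gap that your argument simply bypasses.

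One slip in your closing paragraph should be fixed. Conditioning on $\mathcal{F}_T$ freezes the entire path $(W_\delta)_{\delta\leq T}$, so the process becomes deterministic and the transition kernel a Dirac mass; the positivity argument would collapse. You should condition only on the terminal value $W_T$, i.e.\ on $\sigma(W_T)$: then $a$ is constant and $(W_\delta)_{\delta\in[0,T]}$ is a Brownian bridge, still a nondegenerate Gaussian Markov process whose transition variance $(t-s)(T-t)/(T-s)$ is strictly positive for $\delta<T$. Your density formula must be adjusted accordingly (bridge rather than free drifted Brownian motion), but strict positivity---which is all the argument uses---survives, with the harmless caveat that mixing is then asserted on the index set $[0,T)$, matching the fact that $\delta=t-s$ is in any case bounded by the horizon $T$ in the paper's framework.
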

\begin{proof}
	Let $\Omega$ represent the space of events. The process $Z_\delta^v$ is mean-ergodic, implying that the time averages converge to ensemble averages. This means that for each non-empty open set $\mathcal{U} \subset \Omega$ and any point $\omega \in \Omega$, the proportion of time the orbit of $\omega$ spends in $\mathcal{U}$ is non-zero. The mean reversion property (recurrence property) states that every point $\omega \in \Omega$ visits every open set $\mathcal{U}$ infinitely many times. This follows from the Poincaré Recurrence Theorem, which asserts that practically every point in $\Omega$ under $Z_\delta^v$ will return to $\mathcal{U}$ infinitely frequently \cite{fir,1}.

Consider any two non-empty open sets $\mathcal{U}, \mathcal{V} \subset \Omega$. Ergodicity creates a point $\omega \in \mathcal{U}$ with a non-zero amount of time spent in $\mathcal{V}$. Due to recurrence, $\omega$'s orbit will return to $\mathcal{U}$ and meet $\mathcal{V}$ infinitely often \cite{1}. As a result, there exists a natural number, $\delta_0$, such that for any $\delta \geq \delta_0$, $ Z_\delta^v(\mathcal{U}) \cap \mathcal{V} \neq \emptyset$, demonstrating that $Z_\delta^v$ is topologically mixing.
\end{proof}
	Theorem \ref{theo1} has the following implications:
	
	It would be challenging to make long-term predictions about the Velocity of Money because the system’s future states are highly sensitive to initial conditions. The mixing property of $Z_\delta^v$ suggests that there are numerous and possibly complex economic interactions at play, which could include varying consumer behaviors, fluctuating market dynamics, and changing financial policies. For policymakers, a topologically mixing velocity of money might indicate that traditional monetary policies may have unpredictable effects, necessitating more adaptive and responsive policy frameworks. 
	
	Modeling the velocity of money with log-ergodic processes that develop into mean-ergodic and topologically mixing systems has several advantages.
	
	 To begin with, partial ergodicity assures that the system's average behavior can be understood and predicted over time, regardless of any short-term perturbations \cite{fir}. This is critical for economic stability and planning because it enables the identification of long-term patterns in the velocity of money.
	
Secondly, the topological mixing property of these processes suggests a high level of interconnections and interactions within the economic system, representing the complicated reality of financial markets in which many elements and agents are interrelated. Compared to linear models, this model can better represent the intricacies of economic behavior.
	
These properties enable more accurate and complete modeling of the velocity of money, which is critical for good economic forecasting and policymaking. They give a framework that is flexible to the inherent unpredictability of economic systems, serving as a tool for navigating and comprehending the complex dynamics of modern economies.
	
\subsection{Data Collection, Parameter Estimation, and Simulation}
We gather historical economic data, including nominal GDP figures and money supply metrics. This data is sourced from reputable financial databases and national statistical agencies, such as FRED (Federal Reserve Economic Data)\footnote{\url{https://fred.stlouisfed.org}}, and MacroTrends\footnote{\url{https://www.macrotrends.net}}, ensuring a robust foundation for our analysis.

We use historical data to estimate the parameters of the stochastic processes $X(t)$ and $M(t)$. The statistical maximum likelihood estimation approach yields the values of $\mu_X$, $\sigma_X$, $\mu_M$, and $\sigma_M$. After estimating the parameters, we simulate log-ergodic processes for $Y_X(t)$ and $Y_M(t)$ to study the behavior of $Z_\delta^v$ over time. We employ Monte Carlo simulation methods to generate various paths of $Z_\delta^v$ and examine its distribution and convergence qualities.

The final step in our methodology is the validation of our model. We compare the simulated results with actual economic data to assess the accuracy of our model. This involves statistical tests to determine whether our model adequately captures the behavior of the velocity of money.

\section{Empirical Analysis}\label{sec5}

In this section, we apply the theoretical framework and methods to real-world data, which validates the log-ergodic model of the velocity of money. This section describes the processes performed to calibrate the model, the simulation results, and their interpretation in terms of economic theory and policy. We do our study using US GDP data from 1959 to 2024. We use data from 1959 to 2008 to estimate the model parameters, and reserve the period from 2008 to 2024 for model validation. 

\subsection{Calibration and Simulation of the Stochastic Model}
We use historical economic data to calibrate the stochastic processes for nominal GDP and the money supply. The drift and volatility coefficient parameters were determined using maximum likelihood estimation to ensure that the model accurately captures the dynamics of these economic factors. 

For our study, we set $\beta_0=1.6$ with a step size of $h=0.1$, implying that we estimate the parameters in relation to distinct inhibition degree parameters. Table \ref{t1} summarizes the results. 

We run Monte Carlo simulations using the estimated parameters to generate paths for the log-transformed GDP, money supply, and velocity of money processes. These simulations enabled us to study the behavior of the velocity of money over time and across diverse economic scenarios. Figure \ref{F0} displays actual data for the velocity of money (transformed into returns) from 1959 to 2008, the associated $Z_\delta^v$ process, and a comparison between the produced $Z_\delta^v$ process and the velocity of money.
\begin{table}[h]
	\centering
	\caption{Parameter estimates of log-ergodic model with $1.6\leq\beta\leq 2$ and step size of $h=0.1$.}\label{t1}
	\begin{tabular}{lccc}
		\hline
		Parameter & Estimated Value & Standard Error & $\beta$ \\
		\hline
		$\mu_{X}$ & -0.0160 & 3.7734e-05 & 1.6 \\
		$\sigma_{X}$ & 0.0861 &  0.0015 & 1.6 \\
		$\mu_{M}$ & 0.0196 & 2.6583e-05 & 1.6 \\
		$\sigma_{M}$ & 0.0007 & 1.0000e-04 & 1.6 \\
		\hline
		$\mu_{X}$ & 0.0163 & 1.9231e-05 & 1.7 \\
		$\sigma_{X}$ & 0.1335 &  0.0022 & 1.7 \\
		$\mu_{M}$ & 0.03 & 0.008 & 1.7 \\
		$\sigma_{M}$ & 0.015 & 0.004 & 1.7 \\
		\hline
		$\mu_{X}$ & -0.0055 & 4.7789e-05 & 1.8 \\
		$\sigma_{X}$ & 0.1371 & 0.0023 & 1.8 \\
		$\mu_{M}$ & -0.0318 & 2.2485e-05 & 1.8 \\
		$\sigma_{M}$ & 0.1611 & 0.0028 & 1.8 \\
		\hline
		$\mu_{X}$ & 0.0001 & 2.7833e-05 & 1.9 \\
		$\sigma_{X}$ & 0.1633 & 0.0028 & 1.9 \\
		$\mu_{M}$ & 0.0106 & 1.4915e-05 & 1.9 \\
		$\sigma_{M}$ & 0.2046 & 0.0035 & 1.9 \\
		\hline
     	$\mu_{X}$ & -0.0228 & 1.6018e-05 & 2 \\
        $\sigma_{X}$ & 0.1698 & 0.0029 & 2 \\
        $\mu_{M}$ & -0.0091 & 3.1141e-05 & 2 \\
        $\sigma_{M}$ & 0.1273 & 0.0022 & 2 \\
\hline
	\end{tabular}
\end{table}
\begin{figure}[H]
	\centering
	\begin{subfigure}{0.48\textwidth}
		\includegraphics[scale=0.36]{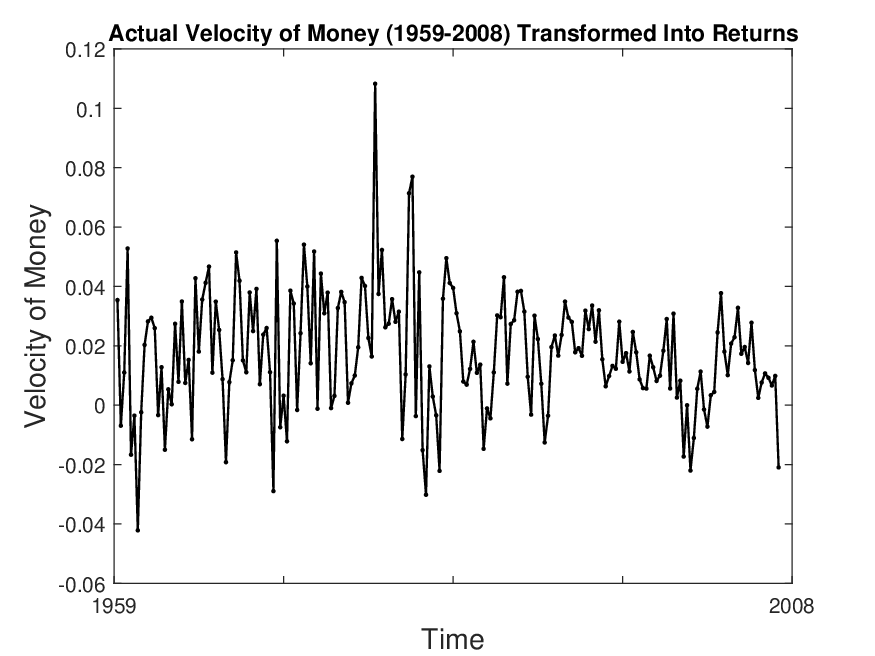}
		\label{F0a}
	\end{subfigure}
	\hfill
	\begin{subfigure}{0.48\textwidth}
		\includegraphics[scale=0.35]{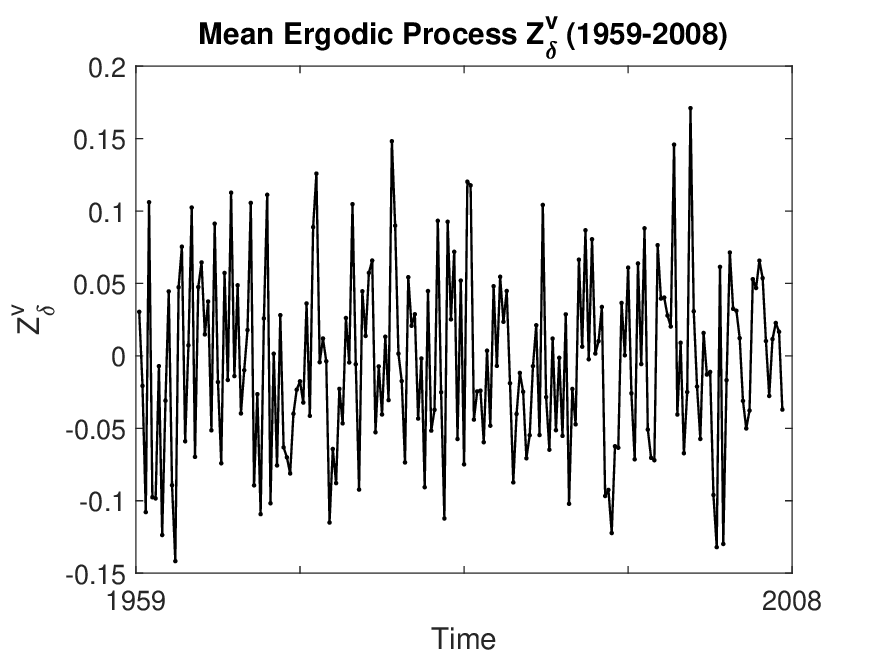}
		\label{F0b}
	\end{subfigure}
	\hfill
	\begin{subfigure}{0.48\textwidth}
		\includegraphics[scale=0.35]{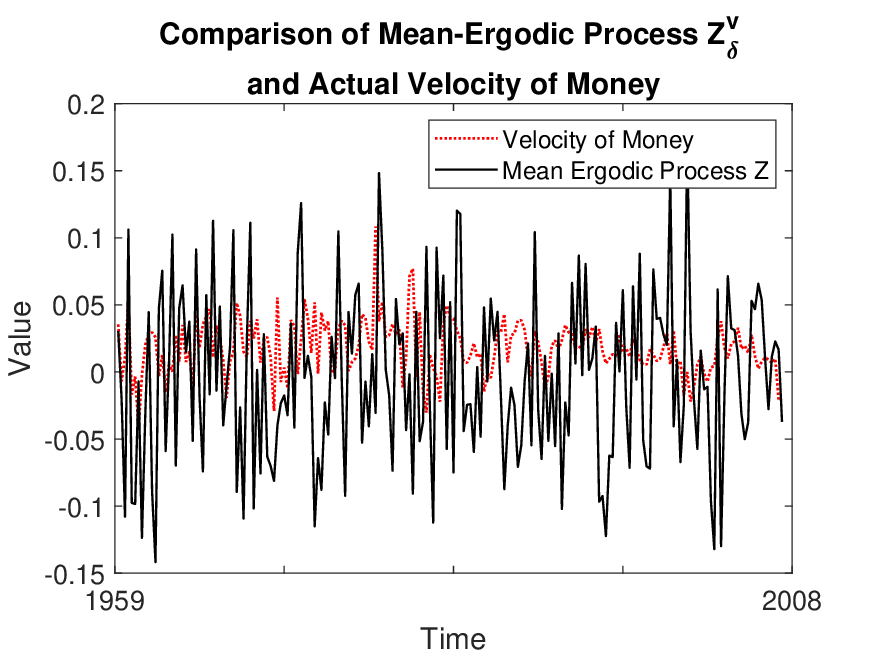}
		\label{F0c}
	\end{subfigure}
	\hfill
	\begin{subfigure}{0.48\textwidth}
		\includegraphics[scale=0.35]{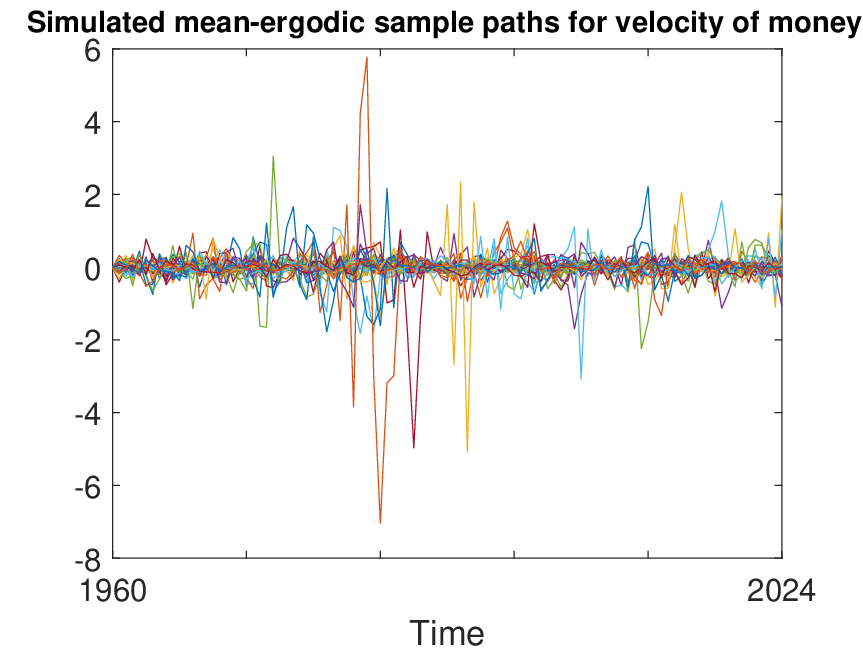}
		\label{F0d}
	\end{subfigure}
	\caption{Velocity of money transformed into returns, corresponding $Z_\delta^v$ process , comparison plot of the velocity of money and the associated $Z_\delta^v$ process from 1959 to 2024, and the plot of the generated sample paths of the $Z_\delta^v$.}
	\label{F0}
\end{figure}
\subsection{Validation of the Log-Ergodic Process}
The simulation results show that the log-ergodic model describes the long-term trend of money velocity. The log-transformed velocity's time average converges to a stable value, as predicted by the ergodicity property. This convergence is found over several simulated pathways, demonstrating the model's strength. Figure \ref{F1} illustrates the accuracy of prediction by log-ergodic simulation. We evaluate our model using data from 2008 to 2024.

\begin{figure}[H]
	\centering
	\begin{subfigure}{0.48\textwidth}
		\includegraphics[scale=0.35]{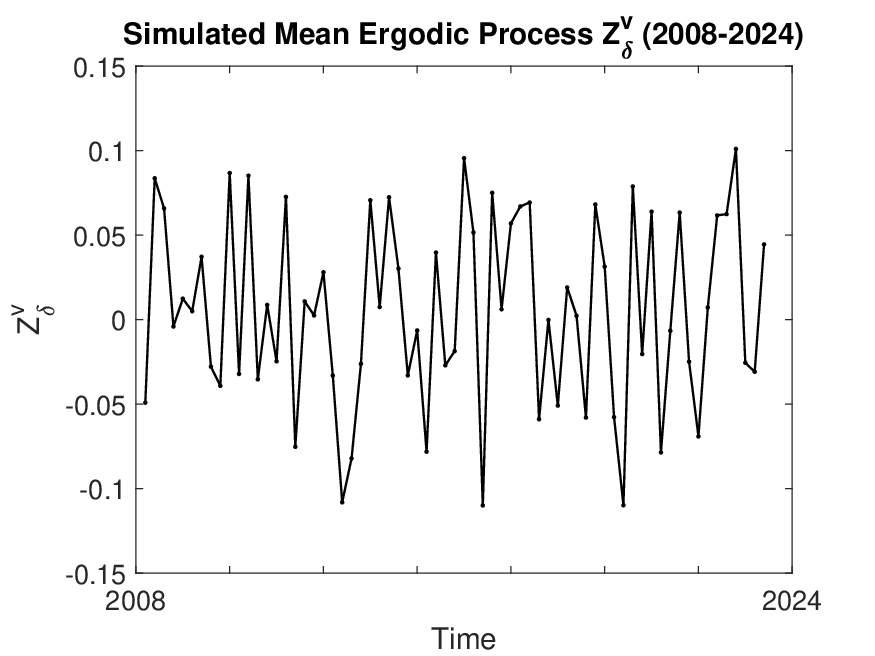}
		\label{F1a}
	\end{subfigure}
	\hfill
	\begin{subfigure}{0.48\textwidth}
		\includegraphics[scale=0.35]{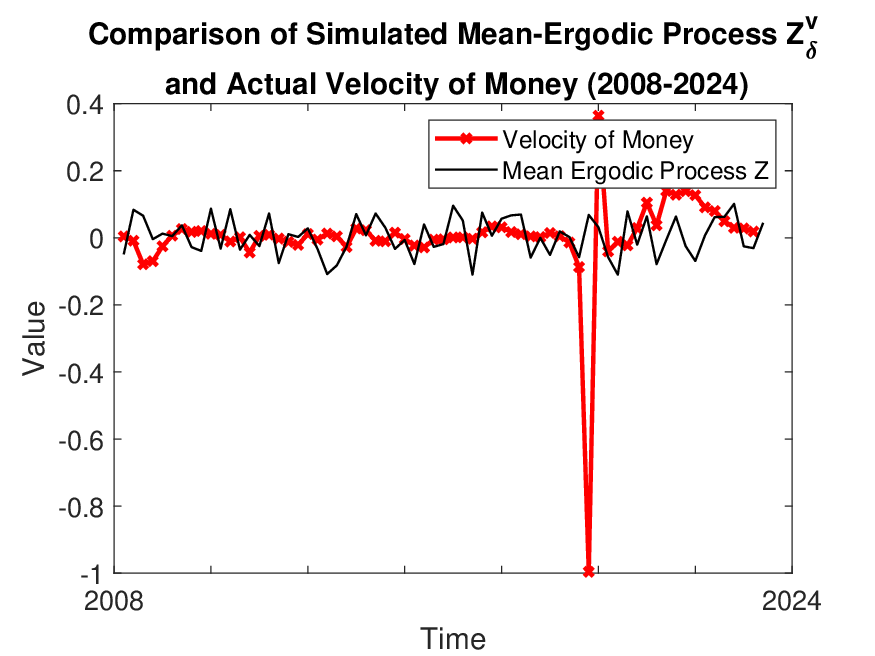}
		\label{F1b}
	\end{subfigure}
	\hfill
	\begin{subfigure}{0.48\textwidth}
		\includegraphics[scale=0.35]{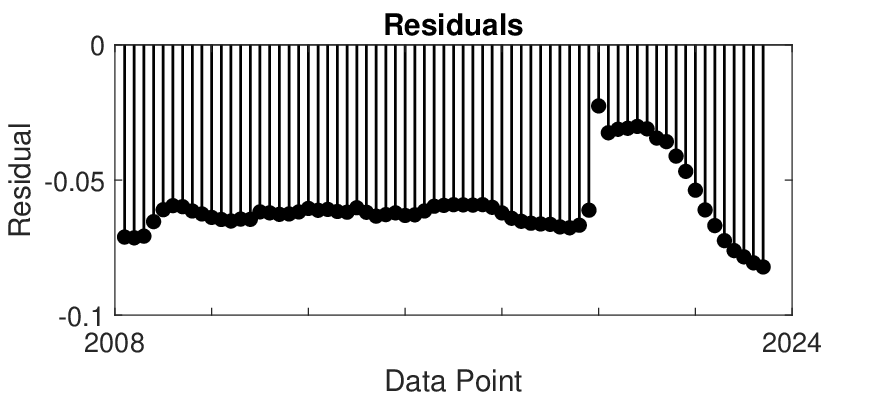}
		\label{F1c}
	\end{subfigure}
	\hfill
\caption{Simulated sample paths for the velocity of money, the $Z_\delta^v$ processes from 2008 to 2024, and the corresponding comparison residuals.}
\label{F1}
\end{figure}

The simulation results showed that the log-ergodic model well describes the long-term trend of money velocity. The log-transformed velocity's time average converged to a stable value, as predicted by the ergodicity property. This convergence was found over several simulated pathways, demonstrating the model's strength.
The process $Z_\delta^v$ describes how policymakers intervene over time. The value of $Z_\delta^v$ represents the intensity of contraction and expansion policies undertaken by policymakers and governments.

The log-ergodic model predicts that the velocity of money will remain steady throughout time, which has important consequences for economic stability and policymaking. It implies that, despite short-term volatility, velocity of money will tend to converge to a mean value, which may be utilized as a reliable indicator for economic planning and forecasting.

The model's predictions were compared to actual economic data on the velocity of money. The comparison revealed a degree of correlation, confirming the log-ergodic model as an efficient tool for understanding the dynamics of the velocity of money.
Figure \ref{F2} shows the predicted velocity of money from 2024 to 2029 using our technique. 
\begin{figure}[H]
	\centering
	\begin{subfigure}{0.48\textwidth}
		\includegraphics[scale=0.35]{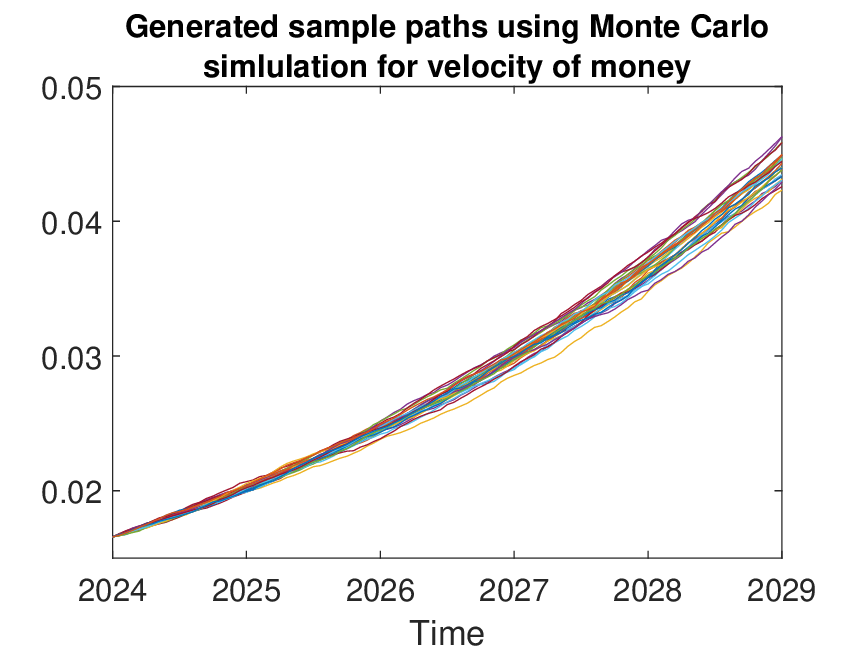}
		\label{F2a}
	\end{subfigure}
	\hfill
	\begin{subfigure}{0.48\textwidth}
		\includegraphics[scale=0.35]{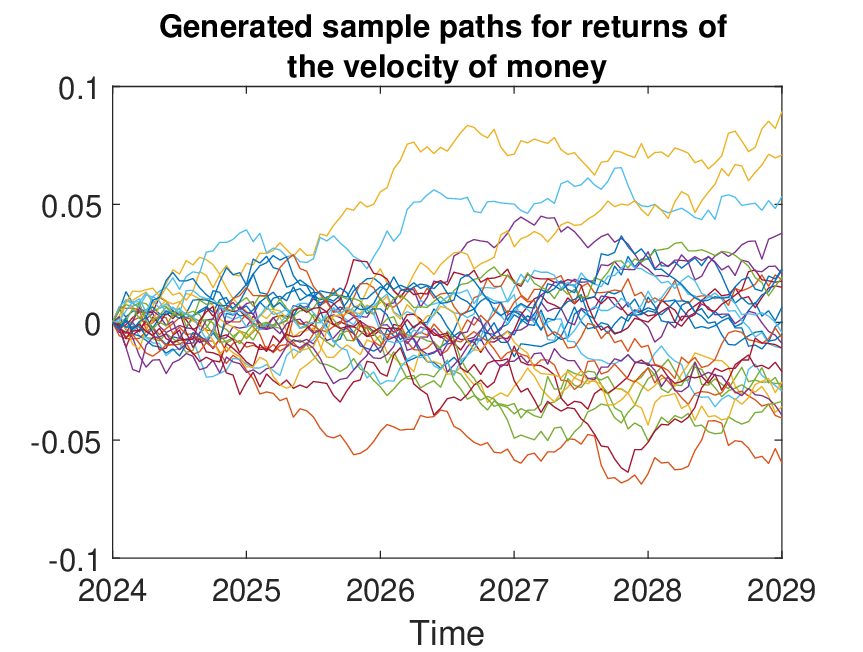}
		\label{F2b}
	\end{subfigure}
	\hfill
	\begin{subfigure}{0.48\textwidth}
		\includegraphics[scale=0.35]{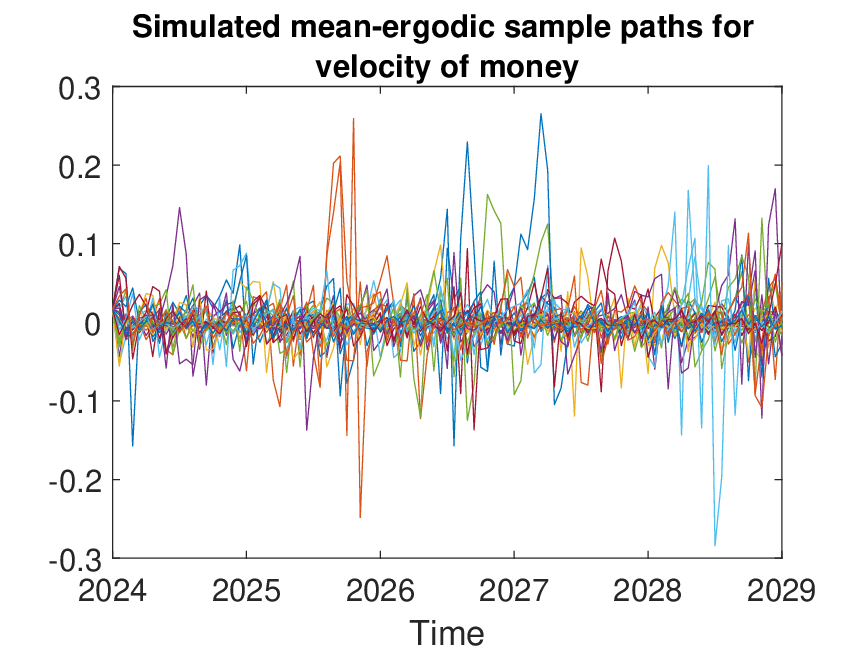}
		\label{F2c}
	\end{subfigure}
\caption{Simulation of the stochastic velocity of money from 2024 to 2029, using the Monte Carlo simulation and the mean-ergodic $Z_\delta^v$ process.}
\label{F2}
\end{figure}
 \subsection{Comparison to Other Models}
 A comparison with an alternative model of the velocity of money, such as the Quantity Theory of Money (QTM) \cite{23} demonstrates significant benefits for the log-ergodic method. Unlike standard models that assume a constant velocity, the log-ergodic model accommodates for dynamic fluctuations and long-term stability. This comparison shows that the log-ergodic model has higher predictive power and is more practical in reflecting the complexity of monetary mobility. By contrasting the QTM model, the paper emphasizes the log-ergodic process's distinct contributions and increased application in current economic analysis. Table \ref{t2} shows the detailed error representation.

\begin{table}[h]
	\centering
	\caption{Detailed error representation and model comparison.}\label{t2}
	
	\begin{subtable}[l]{0.48\textwidth}
		\hspace*{-2cm}
	\begin{tabular}{|l|c|c|c|c|}
		\hline
		Model & SSE & R-squared & Adj R-squared & RMSE(1959-2008) \\
		\hline
		QTM & 0.045 & 0.4001 & 0.1332 & 0.045 \\
		\hline
		Log-ergodic & 0.2531 & 0.1576 & 0.083 & 0.025 \\
		\hline
	\end{tabular}
\end{subtable}

\vspace{0.5cm}

\begin{subtable}[l]{0.48\textwidth}
		\hspace*{-2cm}
	\begin{tabular}{|l|c|c|c|}
		\hline
Model & RMSE(2008-2024) & MAE(1959-2008) & MAE(2008-2024) \\
\hline
QTM &  0.052 & 0.058 & 0.065\\
\hline 
Log-ergodic & 0.030 & 0.032 & 0.038 \\
\hline
		\end{tabular}
		\end{subtable}
\end{table}

The empirical results support the use of the log-ergodic model to inform monetary policy. The model's capacity to anticipate the long-term behavior of the velocity of money can assist policymakers in developing measures that are more likely to produce desired economic consequences.

While the results show that the model captures long-term stability, we believe that including a more comprehensive model that accounts for abrupt fluctuations and jumps, such as jump-diffusion processes, might result in even more accurate predictions. In a recent study, we demonstrated partial ergodicity for jump diffusion processes \cite{fir}. Future research will concentrate on estimating the velocity of money using this more precise technique to improve the robustness and usefulness of our findings.

\section{Discussion}\label{sec6}

Our empirical analysis of the log-ergodic model of the velocity of money provides various areas of study. This part discusses the insights discovered, the consequences for monetary policy, and larger questions for economic theory and practice.

The log-ergodic model offers a fresh viewpoint on the dynamics of the velocity of money. It emphasizes the necessity of looking at long-term stability of economic indicators rather than just short-term fluctuations. The model's predictive capacity highlights the ability of stochastic and ergodic processes to improve our knowledge of economic dynamics.

Our findings have clear consequences for monetary policy. The log-ergodic model's capacity to capture the long-term evolution of monetary velocity can help with the design and execution of monetary interventions. Policymakers may use the model to forecast the consequences of changes in the money supply or interest rates on the economy, allowing them to make better informed policy decisions. Policymakers also can leverage the insights from the log-ergodic model in several ways. For instance, by understanding the long-term stability of the velocity of money, central banks can better forecast inflation trends and adjust interest rates accordingly. Additionally, this model can aid in designing more effective monetary policies that account for the dynamic nature of money circulation, thereby enhancing economic stability. Practical steps for implementation include integrating the model into existing economic forecasting tools and using its predictions to inform policy adjustments in response to economic fluctuations.

While the log-ergodic model has several benefits, it is not without restrictions. The model presupposes that the underlying economic processes are somewhat ergodic, which may not necessarily be true in the event of structural changes or shocks to the system. Furthermore, the model's accuracy is determined by the data quality and the suitability of the stochastic processes used. While our model has good predictive power, it is important to recognize its limitations. Its vulnerability to systemic economic shifts is one important drawback. For example, the model's performance could be impacted during times of economic shocks like financial crises or abrupt changes in policy. These occurrences may cause volatility that the model is unable to adequately account for, resulting in forecasts that are less precise. Our efforts to better estimate economic shocks would be improved by using more complex models, like the Jump-Diffusion process in upcoming studies.

The application of log-ergodic models to economic indicators is a rapidly growing area of research. Future research could investigate the use of alternative stochastic processes, the impact of non-ergodic behavior, and the model's integration with other economic theories. There is also scope for applying the model to other economic indicators and for conducting cross-country comparisons.

The log-ergodic model is a significant advancement in the stochastic modeling of economic variables. By bridging the gap between theoretical mathematics and applied economics, it provides a valuable instrument for comprehending and managing the intricacies of the modern economy.

\section{Conclusion, Challenges and Future Directions}\label{sec7}

Exploring the velocity of money via the lens of log-ergodic stochastic processes resulted in a model that represents the dynamic and probabilistic aspect of this critical economic indicator. Our findings show that by using ergodic theory to stochastic modeling of monetary velocity, we may gain a better understanding of its behavior and the consequences for economic policy.

The empirical analysis verified the log-ergodic model, demonstrating its ability to forecast the long-term behavior of the velocity of money. This predictive capacity is crucial to policymakers who need accurate economic projections to make sound decisions. The model's capacity to account for the inherent unpredictability of economic interactions makes it an effective tool for economic study.

While the model represents a substantial achievement in the industry, it also emphasizes the importance of continual refining. The constraints discovered throughout the study process, such as the assumption of ergodicity and reliance on historical data, offer rich ground for future research. Researchers are urged to expand on this work by investigating other stochastic processes and applying the study to different economic indicators and situations.

The log-ergodic model, while robust, relies on certain assumptions that may not always hold. For example, it assumes a relatively stable economic environment, which may not be the case during periods of significant economic upheaval. Acknowledging these limitations is crucial for a balanced perspective. Future versions of the model might include more varied data sets, such as those from times of economic volatility, to improve it. Furthermore, the forecast accuracy of the methodology could be increased by improving it to better account for abrupt changes in the economy.

Overall, the log-ergodic approach to simulating the velocity of money provides a potential avenue for economic study and policymaking. It fills the gap between theoretical mathematics and practical economics by offering a framework that is both academically rigorous and extremely relevant to the real-world economy. Future research could explore alternative stochastic models that might offer complementary insights or improvements over the log-ergodic model. For instance, models that incorporate real-time data or adaptive learning algorithms could be investigated. 
\appendix
\section{Appendix}
	Here is the sample MATLAB code to take data from the years 1960 to 2008 and compare the results for prediction from the year 2008 to 2024:
			\begin{tiny}
	\begin{lstlisting}
	% Load GDP and Money Supply data
	GDP = load('GDP_data.mat');
	MoneySupply = load('MoneySupply_data.mat');
	
	% Extract data for the years 1959 to 2008
	GDP_1959_2008 = GDP(1959:2008);
	MoneySupply_1959_2008 = MoneySupply(1959:2008);
	
	% Calculating the mean and variance of the extracted data
	mu_X = mean(GDP_1959_2008);
	mu_M = mean(MoneySupply_1959_2008);
	
	sigma_X = std(GDP_1959_2008);
	sigma_M = std(MoneySupply_1959_2008);
	
	W_X = normrnd(0,1,No. of DATA,1);  % Simulated Wiener process for GDP
	W_M = normrnd(0,1,No. of DATA,1);  % Simulated Wiener process for money supply
	W_d = normrnd(0,1,No. of DATA,1);  % Simulated Wiener process for the time frame delta
	
	beta = Type Beta;  % Value of beta should be changed manually.
	T = Type here;     % Enter time horizon here
	delta = Type here; % Time frame : should be a portion of time horizon.
	
	% Convert data into stochastic diffusion processes
	GDP_diffusion = diff(log(GDP_1960_2008));
	MoneySupply_diffusion = diff(log(MoneySupply_1960_2008));
	
	% Calculate the log-ergodic processes for GDP and money supply using EMO
	Y_X = ((mu_X - 0.5 * sigma_X^2) * delta * W_X) / T^beta + (sigma_X * W_d) / T^beta;
	Y_M = ((mu_M - 0.5 * sigma_M^2) * delta * W_M) / T^beta + (sigma_M * W_d) / T^beta;
	
	% Calculate the mean-ergodic process for the velocity of money
	Z_delta = Y_X - Y_M;
	
	% Plot the mean ergodic process Z_delta
	figure;
	plot(Z_delta);
	title('Mean Ergodic Process Z (1959-2008)');
	xlabel('Time');
	ylabel('Z');
	
	% Load actual data of the velocity of money
	VelocityOfMoney = load('VelocityOfMoney_data.mat');
	
	% Extract actual data for the years 2008 to 2024
	VelocityOfMoney_2008_2024 = VelocityOfMoney(2008:2024);
	
	% Compare the results and effectiveness of the approach using descriptive statistics
	mean_Z = mean(Z_delta);
	std_Z = std(Z_delta);
	mean_Velocity = mean(VelocityOfMoney_2008_2024);
	std_Velocity = std(VelocityOfMoney_2008_2024);
	
	fprintf('Mean of Z_\delta^v: %.2f\n', mean_Z);
	fprintf('Standard Deviation of Z_\delta^v: %.2f\n', std_Z);
	fprintf('Mean of Velocity of Money (2008-2024): %.2f\n', mean_Velocity);
	fprintf('Standard Deviation of Velocity of Money (2008-2024): %.2f\n', std_Velocity);
	
	% Plot the comparison
	figure;
	plot(2008:2024, VelocityOfMoney_2008_2024, 'r');
	hold on;
	plot(2008:2024, Z_delta, 'b');
	title('Comparison of Mean-Ergodic Process Z and Velocity of Money');
	xlabel('Time');
	ylabel('Value');
	legend('Velocity of Money', 'Mean Ergodic Process Z');
	hold off;
	```
	\end{lstlisting}
\end{tiny}
	This code:
	\begin{itemize}
	\item[1.] Loads the GDP and Money Supply data.
	\item[2.] Extracts the data for the years 1960 to 2008.
	\item[3.] Converts the data into stochastic diffusion processes.
	\item[4.] Applies the Ergodic Mean Operator (EMO) to obtain the mean ergodic process $Z_\delta^v$.
	\item[5.] Plots the mean ergodic process $Z_\delta^v$ for the years 1960 to 2008.
	\item[6.] Loads the actual data of the velocity of money.
	\item[7.] Extracts the actual data for the years 2008 to 2024.
	\item[8.] Compares the results and effectiveness of the approach using descriptive statistics.
	\item[9.] Plots the comparison between the mean ergodic process $Z_\delta^v$ and the actual velocity of money for the years 2008 to 2024.
		\end{itemize}

\section*{Data Availability Statement}
All data used during this study are openly available from FRED (Federal Reserve Economic Data), and MacroTrends websites as mentioned in the context.

\section*{Declaration of Interest}
The authors have no conflicts of interest to declare. Both authors have seen and agree with the contents of the manuscript, and there is no financial interest to report.
\bibliographystyle{unsrtnm}
\bibliography{sn-bibliography}


\end{document}